\documentclass{ifacconf}
\ifdefined\pdfminorversion
\else
  
\fi

\usepackage[utf8]{inputenc}
\usepackage[T1]{fontenc}
\usepackage{amsmath,amssymb,amsfonts}
\usepackage{graphicx}
\usepackage{natbib}
\usepackage{textcomp}
\usepackage{url}
\usepackage{tikz}
\usepackage{dsfont}
\let\mathbbm\mathds
\usepackage{newunicodechar}
\newunicodechar{—}{\textemdash}
\usetikzlibrary{arrows.meta,fit,backgrounds}
\tikzstyle{block} = [draw, rectangle,
    minimum height=3em, minimum width=3em]
\tikzstyle{sum} = [draw, circle, node distance=1cm]
\tikzstyle{input} = [coordinate]
\tikzstyle{output} = [coordinate]
\tikzstyle{pinstyle} = [pin edge={to-,thin,black}]
\graphicspath{{./figures/}}
\newtheorem{theorem}{Theorem}

\newtheorem{remark}{Remark}

\newenvironment{proof}[1][Proof]
{\par\noindent\textit{#1.}\ }
{\hfill$\square$\par}

\newcommand{\BBM}{\begin{bmatrix}}
\newcommand{\EBM}{\end{bmatrix}}
\newcommand{\BEQ}{\begin{equation}}
\newcommand{\EEQ}{\end{equation}}
\newcommand{\BIT}{\begin{itemize}}
\newcommand{\EIT}{\end{itemize}}

\newcommand{\st}{\mbox{subject to}}

\newcommand{\reals}{\mathbb{R}}

\newcommand{\minimize}{\mbox{minimize}}

\DeclareMathOperator{\diag}{diag}
\DeclareMathOperator{\blkdiag}{blkdiag}

\let\cite\citep
\begin{document}
\begin{frontmatter}
\title{Prediction Error Method for Identification of Subnetworks of Dynamical Systems\thanksref{funding}}
\thanks[funding]{This work was supported by ELLIIT, and by the Novo Nordisk Foundation under grant number NNF20OC0061894.}
\author[Lund]{João Victor Galvão da Mata}
\author[LiU]{Anders Hansson}
\author[DTU]{Martin S.~Andersen}
\address[LiU]{Anders Hansson is with the Department of Electrical Engineering, Linköping University (e-mail: anders.g.hansson@liu.se).}
\address[Lund]{João Victor Galvão da Mata is with the Centre for Mathematical Sciences, Lund University (e-mail: victor.mata@math.lu.se).}
\address[DTU]{Martin S.~Andersen is with the Department of Applied Mathematics and Computer Science, Technical University of Denmark (e-mail: mskan@dtu.dk).}
\begin{abstract}
This paper considers prediction-error identification of a subnetwork embedded
in a dynamic network. Measured boundary signals entering the subnetwork are
appended to the exogenous inputs. Consistency requires that the part of these
boundary signals generated by the subnetwork be reconstructible from the
measured history; the subnetwork may be enlarged to obtain a simple
structural sufficient condition. Under the stated model, stability, delay,
disturbance-separation, informativity, and identifiability assumptions, the
resulting criterion yields consistent estimates of the subnetwork transfer
functions. The predictor may depend
nonlinearly on the open-loop transfer functions, allowing consistent
estimation with fewer measured internal signals than methods based on
predictors that are linear in those transfer functions.
\end{abstract}
\begin{keyword}
System Identification, Networks, Dynamical Systems, Prediction error methods, Consistency, Informativity, Generic network identifiability.
\end{keyword}
\end{frontmatter}
\section{Introduction}
System identification for networks of dynamical systems is an active research 
field with a 
long history. Some early references 
are \cite{VANDENHOF19931523,VanDenHof2013,hen+gev+baz19}. Much of the focus has been on 
identifying parts of a network, called a {\it subnetwork}, with the motivation that in practice it is impossible
to identify large  networks, and often only parts of a network are of interest in applications. 
However, it is important to recognize
that the overall network will have an impact on how the subnetwork can be identified and what the quality of the
subnetwork model will be. This impact is due to the feedback that the overall network usually provides 
to the subnetwork. This means that the problem of identifying subnetworks is a problem of system 
identification under feedback, a topic that also has a long history, see e.g. \cite{GUSTAVSSON197759}. A more recent survey is \cite{Forssell1999}.

We develop a prediction error method (PEM) for this problem. The main idea is
to treat measured boundary signals entering the target subnetwork as
additional measured inputs and then apply complete-network PEM.

Identification of individual modules and subnetworks in linear dynamic
networks has received considerable attention. The local methods of
\cite{VanDenHof2013} construct predictors around a designated module, and
\cite{Dankers2016} give conditions for selecting subsets of node signals as
predictor inputs. These methods target one designated transfer function, and
their local predictor formulations are linear in that target open-loop module.
Network abstraction provides another way to remove unmeasured node signals
while preserving a designated target module \cite{Weerts2020}.
Complete-network methods instead estimate all modules jointly
\cite{Weerts2018}. More generally,
\cite{hansson2025identifiabilitymaximumlikelihoodestimation} establish
consistency for complete dynamic networks with predictors that may depend
nonlinearly on several open-loop transfer functions. This result provides the
complete-network basis for the subnetwork method developed here.

In the standard closed-loop setting considered by
\cite{Ljung1999,Forssell1999}, the output entering the feedback loop is also
the output being predicted. Their prediction-error analysis assumes that the
plant contribution to the predicted output is independent of the current and
future innovations. A delay in the loop is a common sufficient condition: the
current prediction error cannot pass through the feedback loop and affect the
input used to predict the same output.

This property is not automatic when only part of a dynamic network is
measured. A boundary signal returning to the subnetwork may depend on an
unmeasured subnetwork output and can therefore contain information that is
absent from the measured output history. Existing local-module methods address
this issue by enlarging the local predictor input and output sets so that the
relevant feedback paths are accounted for
\cite{VanDenHof2013,Dankers2016,Weerts2020}.

Our method uses a different enlargement rule. Additional modules from the
original network are included in the subnetwork to be estimated until the part
of each returning boundary signal generated by the subnetwork is measured or
reconstructible from the measured history. All modules in the enlarged
subnetwork are then estimated jointly, and the prediction-error orthogonality
argument can be applied. Our work builds on
\cite{hansson2025identifiabilitymaximumlikelihoodestimation}, which considers
complete dynamic networks. If only one transfer function is of interest, the
initial subnetwork may consist of its module, so single-module identification
is included as a special case.

Because the complete-network predictor can depend nonlinearly on several
unknown open-loop transfer functions, the interactions among modules in the
enlarged subnetwork can be represented directly. This can reduce the number of
internal signals that must be measured compared with local methods based on
predictors that are linear in the target modules.
\section{Model of Network}
We consider linear models for the vector-valued input signals 
$u_k\in\reals^n$, $r_k\in\reals^{m_r}$ and 
$e_k\in\reals^{m_e}$, and the vector-valued output signal $y_k\in\reals^n$ 
defined by
\BEQ\label{eqn:y-model}
y_k = \mathbf G(q)u_k + \mathbf \Omega_y(q)r_k+\mathbf H_y(q) e_k
\EEQ
where $\mathbf G$, $\mathbf \Omega_y$ and $\mathbf H_y$ are proper
transfer function
matrices in the forward shift operator,
$q$. We assume that $\mathbf G(q)$ is strictly proper. The sequence $e_k$ is
temporally independent, has zero mean, and has constant diagonal positive definite
covariance matrix $\Sigma_e$.
Let $\mathbf \Upsilon$, $\mathbf \Omega_u$ and $\mathbf H_u$, of dimensions
$n\times n$, $n\times m_r$ and $n\times m_e$, respectively, also be proper
transfer function matrices in the forward shift operator. 
We define a 
feedback connection 
\BEQ\label{eqn:u-model}
u_k=\mathbf \Upsilon(q) y_k+\mathbf \Omega_u(q) r_k + \mathbf H_u(q)e_k.
\EEQ
These two equations 
define what is often called a {\it network} of dynamical 
systems. We assume that the 
closed-loop system is well-posed and stable.

Another popular network formulation considered in the literature is given by
\BEQ\label{eqn:w-model}
w_k=\bar{\mathbf G}(q)w_k+\bar{\mathbf \Omega}(q)r_k+\bar{\mathbf H}(q)e_k
\EEQ
with $(\bar{\mathbf G}, \bar{\mathbf\Omega},\bar{\mathbf H})$ also 
proper transfer function matrices in the forward shift operator. We assume that
$\bar{\mathbf H}(q)$ is diagonal.
It is often assumed that $\bar{\mathbf G}(q)$ has a zero diagonal.
This can be cast in the form of \eqref{eqn:y-model} by taking $u_k=y_k=w_k$ and 
$\mathbf G(q)=\bar{\mathbf G}(q)$. A network on the form \eqref{eqn:y-model} and
\eqref{eqn:u-model} can  be written
\BEQ\label{eqn:yutow}
\BBM y_k\\u_k\EBM=\BBM 0&\mathbf G(q)\\\mathbf \Upsilon(q)&0\EBM \BBM y_k\\u_k\EBM+
\BBM\mathbf \Omega_y(q)\\\mathbf \Omega_u(q)\EBM r_k+
\BBM\mathbf H_y(q)\\\mathbf H_u(q)\EBM e_k
\EEQ
which with  $w_k=(y_k,u_k)$ and obvious definitions of transfer function matrices is of the 
form in \eqref{eqn:w-model} with a zero diagonal $\bar{\mathbf G}(q)$.
Hence the two formulations can be considered to be equivalent, and we
will use them interchangeably. 

We  assume that all the transfer functions
we have defined are dependent on some parameter $\theta$, parts of which 
are going to be identified. 
We denote with $\Theta_o$ the set 
of parameters $\theta$ such that the
closed-loop system has all poles strictly inside the unit circle. 
We define the compact set $\Theta\subset \Theta_o$ such that it contains the parameter
$\theta_0$ which represents the closed-loop system from which we will collect data. This data will be defined 
later on. 
We assume that all transfer functions are rational functions in
$\theta$ and $q$. If we want to emphasize the parameter dependence we may e.g.
write $\mathbf G(q;\theta)$ for transfer functions and $y_k(\theta)$ for signals. 
\section{Prediction Error Method and Consistency}
\begin{figure}[htbp]
\begin{center}
\resizebox{0.35\textwidth}{!}{\begin{tikzpicture}[auto, node distance=2cm]

\node [input, name=r] {};
\node [block, right of=r, node distance=1.5cm] (Omega) {$\Omega$};
\node [sum, right of=Omega, node distance=1.5cm] (sum1) {$+$};
\node [block, right of=sum1, node distance=1.5cm] (G) {$\mathbf G$};
\node [sum, right of=G, node distance=1.5cm] (sum2) {$+$};
\node [block, above of=sum2, node distance=1.5cm] (H) {$\mathbf H$};
\node [input, above of=H, node distance=1.5cm] (e) {};

\node [input, below of=sum1, node distance=1.5cm] (aux1) {};
\node [input, below of=sum2, node distance=1.5cm] (aux2) {};

\node [block, below of=G, node distance=1.5cm] (Upsilon) {$\Upsilon$};

 \draw [-Latex] (r) -- node {$\mathbf r$} (Omega); 
\draw [-Latex] (Omega) -- node {} (sum1); 
\draw [-Latex] (sum1) -- node {$\mathbf u$} (G); 
\draw [-Latex] (G) -- node {} (sum2);
\draw [-Latex] (H) -- node {} (sum2);
\draw [-Latex] (e) -- node {$\mathbf e$} (H); 

\draw [-Latex] (sum2.south) to (aux2)  -- node {$\mathbf y$} (Upsilon);
\draw [-Latex] (Upsilon) to (aux1)  -- node {} (sum1.south);

\end{tikzpicture}}
\end{center}
\caption{Block diagram for a dynamic network.}
\centering
\label{fig:block-overall}
\end{figure}
It is possible to consider a problem equivalent to (\ref{eqn:y-model}--\ref{eqn:u-model})
which is of the same form but of higher dimension for which 
$\mathbf G$  is diagonal, $\Upsilon$ and $\mathbf \Omega_u$ are
zero-one matrices and $\mathbf\Omega_y$ and $\mathbf H_u$ are zero. This is shown in Section~\ref{sec:appendix}.
For that case we have for diagonal $\mathbf H=\mathbf H_y$ 
a multidimensional Box-Jenkins model in a feedback configuration as in 
Figure~\ref{fig:block-overall}.
In \cite{hansson2025identifiabilitymaximumlikelihoodestimation} it has been shown how to obtain the predictor and formulate a maximum likelihood (ML) problem
that can be solved reliably for multiple single-input
single-output ARMAX models in this feedback configuration
also when not all signals in the network are measured. 

To estimate part of a network, partition it as
\begin{align*}\BBM w^A_k\\w^B_k\EBM&=\BBM \bar{\mathbf G}_A&\bar{\mathbf G}_{AB}\\\bar{\mathbf G}_{BA}&\bar{\mathbf G}_{B}\EBM
\BBM w^A_k\\w^B_k\EBM+\BBM \bar{\mathbf\Omega}_A\\\bar{\mathbf \Omega}_B\EBM r_k+
\BBM \bar{\mathbf H}_A\\\bar{\mathbf H}_B\EBM e_k.
\end{align*}
The signals and modules with superscript A form the A-part, and those with
superscript B form the B-part.

We define $\hat r_k^A=(r_k,w_k^B)$, $\hat r_k^B=(r_k,w_k^A)$ and 
$$\hat{\mathbf \Omega}_A=\BBM \bar{\mathbf\Omega}_A&\bar{\mathbf G}_{AB}\EBM,\quad
\hat{\mathbf \Omega}_B=\BBM \bar{\mathbf\Omega}_B&\bar{\mathbf G}_{BA}\EBM.$$
Then it holds that
\begin{align*}
w_k^A&= \bar{\mathbf G}_Aw_k^A+\hat{\mathbf \Omega}_A\hat r_k^A+\bar{\mathbf H}_Ae_k,\\
w_k^B&= \bar{\mathbf G}_Bw_k^B+\hat{\mathbf \Omega}_B\hat r_k^B +\bar{\mathbf H}_Be_k.
\end{align*}
We then let $w_k^A=(y_k^A,u_k^A)$ and similarly for $w_k^B$ and we also let
$$\bar{\mathbf G}_A=\BBM 0&\mathbf G_A\\\Upsilon_A&0\EBM$$
and similarly for $\bar{\mathbf G}_B$. We then have two systems of the form in \eqref{eqn:yutow}, each of
which is equivalent to the form in (\ref{eqn:y-model}--\ref{eqn:u-model}). We may then use the result in
Section~\ref{sec:appendix} to deduce that they can be written as
\begin{align*}
\tilde y_k^A&=\tilde{\mathbf G}_A\tilde u_k^A+\tilde{\mathbf H}_A e_k\\
\tilde u_k^A&=\tilde \Upsilon_A\tilde y_k^A+\tilde \Omega_A\hat r_k\\
\tilde y_k^B&=\tilde{\mathbf G}_B\tilde u_k^B+\tilde{\mathbf H}_B e_k\\
\tilde u_k^B&=\tilde \Upsilon_B\tilde y_k^B+\tilde \Omega_B\hat r_k
\end{align*}
for diagonal transfer function matrices and where the other matrices are zero-one matrices. Moreover, it holds
for zero-one matrices $(\Phi_A,\Phi_B)$ that 
$$w^A_k=\Phi_A\tilde y_k^A,\quad w^B_k=\Phi_B\tilde y_k^B.$$
It then holds with $\tilde\Omega_A=\BBM \tilde \Omega_{AA}&
\tilde \Omega_{AB}\EBM$ and $\tilde\Omega_B=\BBM \tilde \Omega_{BB}&
\tilde \Omega_{BA}\EBM$ that
\begin{align*}
\tilde \Omega_A\hat r_k^A&=\tilde \Omega_{AA}r_k+\tilde \Omega_{AB}\Phi_B\tilde y_k^B\\
\tilde \Omega_B\hat r_k^B&=\tilde \Omega_{BB}r_k+\tilde \Omega_{BA}\Phi_A\tilde y_k^A.
\end{align*}
If we define $\tilde \Upsilon_{AB}=\tilde \Omega_{AB}\Phi_B$ and 
$\tilde \Upsilon_{BA}=\tilde \Omega_{BA}\Phi_A$ we obtain the 
network in Figure~\ref{fig:block-overall-partitioned}, where we do not write out the tildes. The conclusion
of this exercise is that there is no loss in generality to consider a feedback connection as in
Figure~\ref{fig:block-overall-partitioned} where $\mathbf G_A$ and 
$\mathbf G_B$ are diagonal and where the $\Upsilon$ and 
$\Omega$ matrices are zero-one matrices. 
Let $x_k$ denote the measured part of $(\mathbf y_k^A,\mathbf u_k^A)$.
All modules included in the A-part are estimated. The signals entering the
A-part from the B-part are treated as additional measured input signals. The
resulting equations have the form considered in
\cite{hansson2025identifiabilitymaximumlikelihoodestimation}.
\begin{figure}[htbp]
\begin{center}
\resizebox{0.48\textwidth}{!}{\begin{tikzpicture}[auto, node distance=2cm]

\node [input, name=r_A] {};
\node [block, right of=r_A, node distance=1.5cm] (Omega_A) {$\Omega_A$};
\node [sum, right of=Omega_A, node distance=1.5cm] (sum1A) {$+$};
\node [block, right of=sum1A, node distance=1.5cm] (G_A) {$\mathbf G_A$};
\node [sum, right of=G_A, node distance=1.5cm] (sum2A) {$+$};
\node [block, above of=sum2A, node distance=1.5cm] (H_A) {$\mathbf H_A$};
\node [input, above of=H_A, node distance=1.5cm] (e_A) {};

\node [sum, below of=sum1A, node distance=1.5cm] (aux1A) {$+$};
\node [input, below of=sum2A, node distance=1.5cm] (aux2A) {};

\node [block, below of=G_A, node distance=1.5cm] (Upsilon_A) {$\Upsilon_A$};

 \draw [-Latex] (r_A) -- node {$\mathbf r$} (Omega_A); 
\draw [-Latex] (Omega_A) -- node {} (sum1A); 
\draw [-Latex] (sum1A) -- node {$\mathbf u^A$} (G_A); 
\draw [-Latex] (G_A) -- node {} (sum2A);
\draw [-Latex] (H_A) -- node {} (sum2A);
\draw [-Latex] (e_A) -- node {$\mathbf e$} (H_A); 

\draw [-Latex] (sum2A.south) to (aux2A)  -- node {$\mathbf y^A$} (Upsilon_A);
\draw [-Latex] (Upsilon_A) -- node {} (aux1A);
\draw [-Latex] (aux1A)  -- node {} (sum1A.south);


\node [block, below of=aux2A, node distance=1.5cm] (Upsilon_BA) {$\Upsilon_{BA}$};
\node [block, below of=aux1A, node distance=1.5cm] (Upsilon_AB) {$\Upsilon_{AB}$};
\node [block, below of=Upsilon_A, node distance=3cm] (Upsilon_B) {$\Upsilon_{B}$};
\node [input, below of=Upsilon_AB, node distance=1.5cm] (aux1B) {};
\node [sum, below of=Upsilon_BA, node distance=1.5cm] (sum1B) {$+$};
\node [block, below of=Upsilon_B, node distance=1.5cm] (G_B) {$\mathbf G_B$};
\node [sum, right of=G_B, node distance=1.5cm] (sum2B) {$+$};
\node [sum, left of=G_B, node distance=1.5cm] (sum3B) {$+$};
\node [block, below of=sum3B, node distance=1.5cm] (H_B) {$\mathbf H_B$};
\node [input, below of=H_B, node distance=1.5cm] (e_B) {};
\node [block, right of=sum2B, node distance=1.5cm] (Omega_B) {$\Omega_B$};
\node [input, right of=Omega_B, node distance=1.5cm] (r_B) {};

\draw [-Latex] (r_B) -- node {$\mathbf r$} (Omega_B); 
\draw [-Latex] (e_B) -- node {$\mathbf e$} (H_B); 
\draw [-Latex] (H_B) -- node {} (sum3B); 
\draw [-Latex] (G_B) -- node {} (sum3B);
\draw [-Latex] (sum2B) -- node {$\mathbf u^B$} (G_B);
\draw [-Latex] (Omega_B) -- node {} (sum2B);
\draw [-Latex] (sum1B) -- node {} (sum2B);
\draw [-Latex] (Upsilon_BA) -- node {} (sum1B);
\draw [-Latex] (Upsilon_B) -- node {} (sum1B);
\draw [-Latex] (sum3B) to (aux1B)  -- node {$\mathbf y^B$} (Upsilon_B);

\draw [-Latex] (sum3B) to (aux1B)  -- node {} (Upsilon_AB);
\draw [-Latex] (sum2A.south) to (aux2A)  -- node {} (Upsilon_BA);
\draw [-Latex] (Upsilon_AB) -- node {} (aux1A);

\end{tikzpicture}}
\end{center}
\caption{Block diagram for a dynamic network partitioned into two subnetworks.}
\centering
\label{fig:block-overall-partitioned}
\end{figure}
More specifically, the top part of the network in Figure~\ref{fig:block-overall-partitioned} 
can be cast in the form of the network in Figure~\ref{fig:block-overall}
by replacing $\Omega$ with $\bar{\Omega}=\BBM \Omega_A&I\EBM$ and $r_k$ with $\bar r_k=(r_k,\bar y^B_k)$, where 
$\bar y^B_k= \Upsilon_{AB}y^B_k$ is a signal we assume we can measure.\footnote{We can equivalently assume
that $\Upsilon_{AB}$ is known and that we can measure $y_k^B$.} 
This is visualized in Figure~\ref{fig:block-subnetwork}. 
\begin{figure}[htbp]
\begin{center}
\resizebox{0.4\textwidth}{!}{\begin{tikzpicture}[auto, node distance=2cm]

\node [input, name=r] {};
\node [block, right of=r, node distance=2.5cm] (Omega) {$\BBM \Omega_A  & I \EBM$};
\node [sum, right of=Omega, node distance=1.5cm] (sum1) {$+$};
\node [block, right of=sum1, node distance=1.5cm] (G) {$\mathbf G_A$};
\node [sum, right of=G, node distance=1.5cm] (sum2) {$+$};
\node [block, above of=sum2, node distance=1.5cm] (H) {$\mathbf H_A$};
\node [input, above of=H, node distance=1.5cm] (e) {};

\node [input, below of=sum1, node distance=1.5cm] (aux1) {};
\node [input, below of=sum2, node distance=1.5cm] (aux2) {};

\node [block, below of=G, node distance=1.5cm] (Upsilon) {$\Upsilon_A$};

 \draw [-Latex] (r) -- node {$(\mathbf r, \bar{\mathbf y}^B)$} (Omega); 
\draw [-Latex] (Omega) -- node {} (sum1); 
\draw [-Latex] (sum1) -- node {$\mathbf u^A$} (G); 
\draw [-Latex] (G) -- node {} (sum2);
\draw [-Latex] (H) -- node {} (sum2);
\draw [-Latex] (e) -- node {$\mathbf e$} (H); 

\draw [-Latex] (sum2.south) to (aux2)  -- node {$\mathbf y^A$} (Upsilon);
\draw [-Latex] (Upsilon) to (aux1)  -- node {} (sum1.south);

\end{tikzpicture}}
\end{center}
\caption{Block diagram for a dynamic subnetwork.}
\centering
\label{fig:block-subnetwork}
\end{figure}
It should be stressed that $\bar y^B_k$ will in general be correlated with
$e_k$, and therefore the criterion function obtained by applying the approach
of \cite{hansson2025identifiabilitymaximumlikelihoodestimation} to a
subnetwork is not, in general, a true ML criterion. We impose the following
boundary-information condition: the component of $\bar y_k^B$ generated by
signals from A is causally reconstructible from $\{x_j:j<k\}$ and the measured
exogenous inputs available at time $k$. Thus, $\bar y_k^B$ provides no
additional information about the current A-part innovation.

It is not necessary to measure every signal from A entering B. Only a boundary
component that can affect $\bar y_k^B$ must be measured or causally
reconstructible. Measuring every signal from A entering B is a simple, but
potentially conservative, structural condition that ensures the
boundary-information condition. If this structural condition does not hold,
relevant modules can be moved from B to A and the condition checked
at the new boundary.

Enlargement only moves modules from B to A and terminates after at
most all modules in B have been included. In the worst case, estimation covers
the complete network. Enlargement is needed only to obtain the simple
structural condition; direct verification of reconstructibility may permit a
smaller A-part. The method is most useful when a small measured or
reconstructible boundary separates the target A-part from the B-part;
otherwise, the number of nuisance modules and the optimization dimension can
become large.

\cite{VanDenHof2013,Dankers2016,Weerts2020} and the present method may enlarge
the identification model beyond the original target. The former methods
enlarge or abstract the predictor model around a designated module, whereas the present method enlarges
the A-part by including original network modules until the
boundary-information condition is satisfied. These enlargements need not
contain the same transfer functions or require the same measured signals. The
present method also allows the initial A-part to contain several target
modules. If only one transfer function is the target, the initial A-part may
contain only that module, and the method applies as a special case. Unlike the
cited local predictor formulations, the resulting predictor may depend
nonlinearly on the open-loop transfer functions and can consequently require
fewer internal measurements.
In the remaining part of this
section we restrict ourselves to the case when $(\mathbf G_A,\mathbf H_A)$
defines ARMAX models. Notice that they are single-input
single-output models, since 
the transfer functions are assumed to be diagonal. 

To estimate the parameter vector $\theta$ for 
subnetwork A we consider the following 
optimization problem 
\BEQ\label{opt:ML-stationary}
\begin{aligned}
\minimize\quad&\frac{1}{2}\sum_{k=1}^N \left(\epsilon_k^T\Sigma_{\epsilon}^{-1}\epsilon_k+
\ln \det \Sigma_{\epsilon}\right)\\
\st\quad& \hat{x}_k=\mathbf W_xx_k+\mathbf W_r\bar r_k\\
&\epsilon_k=x_{k}-\hat x_{k},\quad\theta\in\Theta
\end{aligned}
\EEQ
with variables $(\theta,\epsilon,\hat x)$,
where the predictor transfer functions $\mathbf W_x$ and  $\mathbf W_r$ and 
the covariance $\Sigma_\epsilon$ are implicitly parameterized 
via an algebraic Riccati equation, cf.
\cite{hansson2025identifiabilitymaximumlikelihoodestimation}. The signal $x$
is measured in addition to $\bar r$.
In case $\bar y^B_k$ is
independent of $e_k$ it holds that this is an ML problem. This is 
however not the case in general for our problem, and then we have what in the literature is
called a PEM problem with an ML-like criterion function, cf. \cite{ljung_conver_1978}. 

We will not solve the problem as it stands, but instead
use an equivalent formulation presented in \cite{10590797} and 
based on maximum likelihood estimation for missing 
data, \cite{HANSSON20121955,Wallin02112014}. 
In this way we will not have to consider the algebraic Riccati equation as a constraint, 
which would have been difficult. However, the predictor formulation we have above will make
it possible for us to conclude that we can estimate $\mathbf G_A$ consistently. 

Consistency of the corresponding estimator for identification
of complete networks was established in \cite{hansson2025identifiabilitymaximumlikelihoodestimation}. The predictor
used here is the same Kalman filter predictor, but applied to
the subnetwork shown in Figure~\ref{fig:block-subnetwork}, where the signals entering
the subnetwork from the surrounding network are treated as
additional measured inputs. Consequently, the only difference
compared to the setting considered in \cite{hansson2025identifiabilitymaximumlikelihoodestimation} is that some of the
measured inputs originate from the surrounding feedback
network.

\begin{theorem}
Let $\hat\theta_N$ be the parameter component of a global minimizer of
\eqref{opt:ML-stationary} based on $N$ samples. Suppose that the
boundary-information condition holds and that the following conditions are satisfied:
(i) the $A$- and $B$-polynomials of the ARMAX models are coprime;
(ii) the $C$-polynomials of the ARMAX models have no zeros on the unit circle;
(iii) $(x_k,\bar r_k)$ is informative enough; and (iv) the A-part of the network is
generically network identifiable. Then, by Theorem~7 of
\cite{hansson2025identifiabilitymaximumlikelihoodestimation},
\begin{equation}
 \mathbf G_A(q;\hat\theta_N)
 \xrightarrow[N\to\infty]{\mathrm{a.s.}} \mathbf G_A(q;\theta_0)
 \label{eq:GA-consistency}
\end{equation}
where $\mathrm{a.s.}$ denotes convergence with probability one.
\end{theorem}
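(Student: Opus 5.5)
The plan is to reduce the statement to Theorem~7 of \cite{hansson2025identifiabilitymaximumlikelihoodestimation} by checking that every ingredient of that theorem's proof survives when some measured inputs, namely $\bar y_k^B$, come from the surrounding network rather than being exogenous. That proof has three parts. First, an almost-sure uniform convergence result for the criterion, in the style of \cite{ljung_conver_1978}: the normalized criterion $V_N(\theta)$ converges uniformly on the compact set $\Theta$ to its expectation $\bar V(\theta)$. Second, a characterization of the minimizers of $\bar V$ via an orthogonality argument. Third, an identifiability step showing that these minimizers give $\mathbf G_A(q;\theta)=\mathbf G_A(q;\theta_0)$. I will take the three steps in this order.

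\emph{Uniform convergence.} Only stability and quasi-stationarity are needed here, not independence of $\bar r_k$ from $e_k$. The full network is stable for $\theta\in\Theta\subset\Theta_o$, so $(x_k,\bar r_k)$ are jointly quasi-stationary, being stable filtered versions of $(r_k,e_k)$. Conditions (i)--(ii) ensure that, for every $\theta\in\Theta$, the Riccati solution exists and the predictor filters $\mathbf W_x,\mathbf W_r$ are uniformly stable. Moreover $\Sigma_\epsilon(\theta)$ is bounded away from singularity on the compact $\Theta$. The standard Ljung convergence theorem then applies verbatim.

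\emph{Minimizers of the limit criterion.} This is the main obstacle, and it is exactly where the boundary-information condition enters. I would write
\[
\epsilon_k(\theta)=\epsilon_k(\theta_0)+\bigl(\hat x_k(\theta_0)-\hat x_k(\theta)\bigr).
\]
The first term is the innovation of the A-part, $\epsilon_k(\theta_0)=x_k-\hat x_k(\theta_0)$. The key claim is that the second term is a function of $\{x_j\}_{j<k}$ and of $\bar r_j$ for $j\le k$, and that it is uncorrelated with $\epsilon_k(\theta_0)$. In the closed-loop setting of \cite{Ljung1999,Forssell1999} this is secured by a delay in the loop. Here I would split $\bar y_k^B$ into two components:
\begin{itemize}
\item the part generated by $r$ and by the noise entering B, which is independent of the A-part innovation at time $k$ because $\Sigma_e$ is diagonal and $\mathbf H$ is diagonal, so the disturbances are separated;
\item the part generated by A-signals, which by assumption is causally reconstructible from $\{x_j\}_{j<k}$ and the exogenous inputs.
\end{itemize}
Hence $\bar r_k$ lies in the past-measurable span modulo components orthogonal to $\epsilon_k(\theta_0)$, and the cross term vanishes in expectation. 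Strict properness of $\mathbf G$ supplies the needed delay. It follows that $\bar V(\theta)\ge \bar V(\theta_0)$, with equality if and only if $\hat x_k(\theta)=\hat x_k(\theta_0)$ almost surely and $\Sigma_\epsilon(\theta)=\Sigma_\epsilon(\theta_0)$. The inequality uses the usual $\operatorname{tr}\Sigma^{-1}S+\ln\det\Sigma$ bound.

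\emph{Identifiability.} By informativity (iii), equality of predictions implies equality of the predictor transfer functions $\mathbf W_x(\theta)=\mathbf W_x(\theta_0)$ and $\mathbf W_r(\theta)=\mathbf W_r(\theta_0)$. Generic network identifiability (iv) of the A-part, combined with coprimeness (i), then yields $\mathbf G_A(q;\theta)=\mathbf G_A(q;\theta_0)$ for every $\theta$ in the minimizing set. Since $\hat\theta_N$ converges almost surely to that set and $\mathbf G_A$ is continuous (rational) in $\theta$, we obtain \eqref{eq:GA-consistency}. This last step is identical to Theorem~7 of \cite{hansson2025identifiabilitymaximumlikelihoodestimation}, because the predictor structure is the same with $\bar r$ in place of $r$.
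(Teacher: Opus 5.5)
Your proposal is correct and takes essentially the same route as the paper. Both reduce the result to Theorem~7 of \cite{hansson2025identifiabilitymaximumlikelihoodestimation}, and the only new ingredient is the orthogonality of $\epsilon_k(\theta_0)$ to the predictor's information, obtained from strict properness of $\mathbf G$, the disturbance separation given by diagonal $\bar{\mathbf H}$ and $\Sigma_e$, and the boundary-information condition. You additionally spell out the uniform-convergence, minimizer and identifiability steps that the paper delegates to \cite{Ljung1999} and Section~VII of the cited work. Like the paper, you take $\mathbb E[\epsilon_k(\theta_0)x_j^T]=0$ for $j<k$ as part of $\epsilon_k(\theta_0)$ being the A-part innovation; when some A-outputs are unmeasured, this relation also rests on the boundary-information condition, not only the one for $\bar r_k$.
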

\begin{proof}
At the true parameter, the first two stated conditions provide the regularity
conditions for the ARMAX predictor. Strict properness of $\mathbf G(q)$
prevents the current disturbance in the
enlarged A-part from reaching $\bar y_k^B$. Diagonality of
$\bar{\mathbf H}(q)$ and $\Sigma_e$ separates the current disturbances in the
enlarged A-part and the remaining B-part. Finally, the boundary-information
condition ensures that every measured boundary signal entering A depends on A
only through information reconstructible from past measurements. Consequently,
the information used by the predictor contains no current A-part innovation,
and
\begin{equation}
 \mathbb E\!\left[\epsilon_k x_j^T\right]=0,\quad j<k,
 \qquad
 \mathbb E\!\left[\epsilon_k\bar r_j^T\right]=0,\quad j\leq k.
 \label{eq:subnetwork-orthogonality}
\end{equation}
The orthogonality relations in \eqref{eq:subnetwork-orthogonality} are the
additional condition needed in the subnetwork setting. The standard
prediction-error consistency argument, including the closed-loop case, is
given in \cite[Secs.~8.3--8.4 and 13.4--13.5]{Ljung1999}; see also
\cite{Forssell1999}. Together with the informativity and identifiability
conditions, these relations allow the consistency result in Section VII of
\cite{hansson2025identifiabilitymaximumlikelihoodestimation} to be applied to
the enlarged A-part.
\end{proof}
\begin{remark}
For the subvector $\theta_{G_A}$ parameterizing $\mathbf G_A$,
$\hat\theta_{G_A,N}\xrightarrow{\mathrm{a.s.}}\theta_{G_A,0}$ as $N\to\infty$.
Moreover, without loss of generality, it may be necessary to replace $x_k$
by suitable linearly independent linear combinations of its components in
order to satisfy Assumption~A2 of
\cite{hansson2025identifiabilitymaximumlikelihoodestimation}.
See \cite{hansson2026informativityidentifiabilityidentificationnetworks} for
informativity and generic network identifiability.
\end{remark}
The main motivation for considering the PEM formulation above is that it
permits predictors that are nonlinear in the open-loop transfer functions in
$\mathbf G_A$. Such predictors can require fewer measured signals in the
A-part than linear predictors while still yielding consistent estimates of
the parameters of $\mathbf G_A$.
It should be mentioned that, in general, we cannot estimate
the noise model. 

\section{Numerical Example}\label{sec:numerical}
\subsection{Network and measurement configurations}
We consider a seven-module network and estimate $\mathbf G^1,\mathbf G^2,\mathbf G^3$, with
$A=\{1,2,3\}$ and $B=\{4,5,6,7\}$. Its interconnection is specified by
\begin{align}
 u^1&=y^6,&u^2&=r^1+y^1+y^3,&u^3&=r^2+y^2,\notag\\
 u^4&=y^7,&u^5&=y^4,&u^6&=r^3+y^5+y^7,\notag\\
 u^7&=y^3.&&&&\label{eq:example-interconnection}
\end{align}
For this example, $\mathbf\Omega_y=\mathbf H_u=0$,
$\mathbf\Upsilon=\Upsilon$, and $\mathbf\Omega_u=\Omega$.
The interconnection matrices in \eqref{eqn:u-model} for
\eqref{eq:example-interconnection} are
\begin{equation}
\Upsilon=\begin{bmatrix}
0&0&0&0&0&1&0\\
1&0&1&0&0&0&0\\
0&1&0&0&0&0&0\\
0&0&0&0&0&0&1\\
0&0&0&1&0&0&0\\
0&0&0&0&1&0&1\\
0&0&1&0&0&0&0
\end{bmatrix},\quad
\Omega=\begin{bmatrix}
0&0&0\\1&0&0\\0&1&0\\0&0&0\\0&0&0\\0&0&1\\0&0&0
\end{bmatrix}.\label{eq:example-matrices}
\end{equation}
In the notation of the preceding section, the incoming signal is
$\bar y^B=\Upsilon_{AB}y^B=(y^6,0,0)$. The external input $r^3$ does not
enter $A$ directly. Omitting these unused and zero components, the measured
input signal for estimation is $(r^1,r^2,y^6)$, and
\begin{equation}
\begin{aligned}
 \Upsilon_A&=\begin{bmatrix}0&0&0\\1&0&1\\0&1&0\end{bmatrix},\\
 \Omega_A r_k+\Upsilon_{AB}y_k^B
 &=\begin{bmatrix}0&0&1\\1&0&0\\0&1&0\end{bmatrix}
 \begin{bmatrix}r_k^1\\r_k^2\\y_k^6\end{bmatrix}.
\end{aligned}\label{eq:example-subnetwork-matrices}
\end{equation}
The measured output vector $x_k$ is $y_k^3$, $(y_k^1,y_k^3)$, or
$(y_k^1,y_k^2,y_k^3)$ in the three configurations considered; the same
measured inputs are available in every case.
The signal from A to B that can return to A is $y_k^3$, which is included in
every choice of $x_k$. Hence the selected A-part satisfies the simple
structural sufficient condition, and no enlargement is needed in this
example.

For the diagram, write $v^i=\mathbf H^i e^i$ for each module disturbance.
\begin{figure*}[t]
\centering
\resizebox{0.82\textwidth}{!}{\begin{tikzpicture}[auto, node distance=2cm]
    \node [input] (sum11) {$+$};
    \node [block, right of=sum11, node distance=3cm] (G1) {$\mathbf G^1$};
    \node [sum, right of=G1, node distance=2cm] (sum1) {$+$};
    \node [input, above of=sum1, node distance=1.2cm] (e1) {};

    \node [sum, right of=sum1, node distance=2cm] (sum12) {$+$};
    \node [block, right of=sum12, node distance=3cm] (G2) {$\mathbf G^2$};
    \node [sum, right of=G2, node distance=2cm] (sum2) {$+$};
    \node [input, below of=sum2, node distance=1.2cm] (e2) {};
    \node [input, below of=sum12, node distance=1.2cm] (r2) {};
    
    \node [sum, right of=sum2, node distance=2cm] (sum13) {$+$};
    \node [input, below of=sum13, node distance=1.2cm] (r3) {};
    \node [block, right of=sum13, node distance=2cm] (G3) {$\mathbf G^3$};
    \node [sum, right of=G3, node distance=2cm] (sum3) {$+$};
    \node [input, below of=sum3, node distance=1.2cm] (e3) {};
    \node [input, right of=sum3] (y3) {};


    \node [input, above of=sum3 ,node distance=1cm] (aux3) {};
    \node [input, above of=sum12, node distance=1cm] (aux4) {};


    \node [input, above of=sum3 ,node distance=2.5cm] (auxd1) {};
    \node [block, left of=auxd1, node distance=3cm] (G7) {$\mathbf G^7$};
    \node [sum, left of=G7, node distance=2cm] (sum7) {$+$};
    \node [input, below of=sum7, node distance=1.2cm] (e7) {};

    \node [sum, above of=sum11, node distance=2.5cm] (sum6) {$+$};
    \node [block, right of=sum6, node distance=1.7cm] (G6) {$\mathbf G^6$};
    \node [sum, right of=G6, node distance=2.5cm] (sumu6) {$+$};
    \node [input, below of=sumu6, node distance=1.2cm] (r4) {};
    \node [input, above of=sum6, node distance=1.2cm] (e6) {};


    \node [input, above of=sum7 ,node distance=1.5cm] (auxB1) {};
    \node [block, left of=auxB1, node distance=2.5cm] (G4) {$\mathbf G^4$};
    \node [sum, left of=G4, node distance=1.7cm] (sum4) {$+$};
    \node [sum, above of=sumu6, node distance=1.5cm] (sum5) {$+$};
    \node [block, right of=sum5, node distance=1.7cm] (G5) {$\mathbf G^5$};
    \node [input, above of=sum4, node distance=1.2cm] (e4) {};
    \node [input, above of=sum5, node distance=1.2cm] (e5) {};

    \draw [-Latex] (e1) -- node {$v^1$} (sum1);
    \draw [-Latex] (G1) -- node {} (sum1);
    \draw [-Latex] (sum1) -- node {$y^1$} (sum12);

    \draw [-Latex] (r2) -- node {$r^1$} (sum12);
    \draw [-Latex] (sum12) -- node {$u^2$} (G2);
    \draw [-Latex] (G2) -- node {} (sum2);
    \draw [-Latex] (e2) -- node {$v^2$} (sum2);
    \draw [-Latex] (sum2) -- node {$y^2$} (sum13);

    \draw [-Latex] (r3) -- node {$r^2$} (sum13);
    \draw [-Latex] (sum13) -- node {$u^3$} (G3);
    \draw [-Latex] (G3) -- node {} (sum3);
    \draw [-Latex] (e3) -- node {$v^3$} (sum3);
    \draw [-Latex] (sum3) to (aux3) to (aux4) -- node {$y^3$} (sum12);

    \draw [-Latex] (sum3) to (auxd1) -- node {$u^7 = y^3$} (G7);
    \draw [-Latex] (G7) -- node {} (sum7);
    \draw [-Latex] (e7) -- node {$v^7$} (sum7);

    \draw [-Latex] (G6) -- node {} (sum6);
    \draw [-Latex] (sumu6) -- node {$u^6$} (G6);
    \draw [-Latex] (r4) -- node {$r^3$} (sumu6);
    \draw [-Latex] (sum6) to (sum11) -- node {$u^1 = y^6$} (G1);
    \draw [-Latex] (e6) -- node {$v^6$} (sum6);
    \draw [-Latex] (sum7) -- node {$y^7$} (sumu6);


    \draw [-Latex] (sum7) to (auxB1) -- node {$u^4 = y^7$} (G4);
    \draw [-Latex] (G4) -- node {} (sum4);
    \draw [-Latex] (G5) -- node {} (sum5);
    \draw [-Latex] (e4) -- node {$v^4$} (sum4);
    \draw [-Latex] (e5) -- node {$v^5$} (sum5);
    \draw [-Latex] (sum5) -- node {$y^5$} (sumu6);
    \draw [-Latex] (sum4) -- node {$u^5 = y^4$} (G5);

\begin{scope}[on background layer]
\node[draw,dashed,rounded corners,inner sep=7pt,fit=(G1)(sum1)(sum12)(G2)(sum2)(sum13)(G3)(sum3)(e1)(e2)(e3)(r2)(r3),label={[font=\large]below:Target subnetwork $A$}] {};
\end{scope}
\end{tikzpicture}}
\caption{Example network, with target modules $A=\{1,2,3\}$ enclosed by the
dashed line and remaining modules in $B$. The signal $y^6$ is measured
in every experiment. Internal measurements are $y^3$, $(y^1,y^3)$, or
$(y^1,y^2,y^3)$.}
\label{fig:network_num_example}
\end{figure*}

All modules have order two, with transfer functions
\begin{equation}
 \mathbf G^i(q)=\frac{b_1^i q+b_2^i}{q^2+a_1^i q+a_2^i},\qquad
 \mathbf H^i(q)=\frac{q^2+c_1^i q+c_2^i}{q^2+a_1^i q+a_2^i}.
 \label{eq:numerical-modules}
\end{equation}
The plant coefficients are $a_1^i,a_2^i,b_1^i,b_2^i$; the coefficients
$c_1^i,c_2^i$ describe the noise numerator, and $\lambda^i$ is the
variance of $e^i_k$. 

\begin{table}[t]
\centering
\caption{Data-generating coefficients; $\lambda^i=0.01i$.}\label{tab:coefficients}
\setlength{\tabcolsep}{4pt}
\begin{tabular}{crrrrrr}
\hline
$i$&$a_1^i$&$a_2^i$&$b_1^i$&$b_2^i$&$c_1^i$&$c_2^i$\\
\hline
1&1&0.25&0.3&0.16&0.3&-0.01\\
2&-0.8&0.15&0.8&-0.3&-0.8&0.2\\
3&0.45&-0.13&-0.4&-0.25&-0.02&-0.8\\
4&-0.45&-0.1&-2&0.4&-0.15&-0.07\\
5&0.1&-0.4&2.2&2&-0.6&-0.05\\
6&-0.2&-0.15&0.15&0.05&1&0.15\\
7&0.5&0.05&1&0.2&0.15&-0.7\\
\hline
\end{tabular}
\end{table}

\subsection{Algebraic recovery and identifiability}\label{sec:algebraic-recovery}
For the coefficients in Table~\ref{tab:coefficients}, the poles
of the full network and of subnetwork $A$, with $(r^1,r^2,y^6)$ treated
as external inputs, have absolute values less than $0.91850$ and $0.64341$, respectively. All seven plant modules are stable and strictly proper.
Eliminating the unmeasured outputs $y^1$ and $y^2$ gives
\begin{align}
 \Delta y^3={}&\mathbf G^3\mathbf G^2\mathbf G^1 y^6
 +\mathbf G^3\mathbf G^2r^1+\mathbf G^3r^2\notag\\
 &+\mathbf G^3\mathbf G^2\mathbf H^1e^1
 +\mathbf G^3\mathbf H^2e^2+\mathbf H^3e^3,
 \label{eq:eliminate}
\end{align}
where $\Delta=1-\mathbf G^3\mathbf G^2$ and transfer operators act on
the associated signals. In the input ordering $(y^6,r^1,r^2)$, the transfer
matrix from measured inputs to $y^3$ has entries
\begin{equation}
 T_1=\frac{\mathbf G^3\mathbf G^2\mathbf G^1}{\Delta},\qquad
 T_2=\frac{\mathbf G^3\mathbf G^2}{\Delta},\qquad
 T_3=\frac{\mathbf G^3}{\Delta}.\label{eq:closed-loop-map}
\end{equation}
The target modules are recovered by
\begin{equation}
 \mathbf G^1=T_1/T_2,\qquad \mathbf G^2=T_2/T_3,\qquad
 \mathbf G^3=T_3/(1+T_2).\label{eq:recovery}
\end{equation}
Equations~\eqref{eq:recovery} establish generic uniqueness, or generic network 
identifiability from
$(T_1,T_2,T_3)$ whenever the denominators are not identically zero.
For coprime plant polynomials with monic denominators of the prescribed
orders, the transfer functions in \eqref{eq:recovery} also determine the
plant coefficients. Thus the three target plants can be recovered from
the transfer functions between the measured inputs and $y^3$, even when
$y^1$ and $y^2$ are unmeasured.

Direct linear PEM for all three open-loop transfer functions requires measuring
$(y^1,y^2,y^3)$ together with the measured signals
$(y^6,r^1,r^2)$. This gives the equations
\begin{equation}
\begin{aligned}
 y^1&=\mathbf G^1y^6+\mathbf H^1e^1,\\
 y^2&=\mathbf G^2(r^1+y^1+y^3)+\mathbf H^2e^2,\\
 y^3&=\mathbf G^3(r^2+y^2)+\mathbf H^3e^3.
\end{aligned}\label{eq:direct-pem}
\end{equation}
In \eqref{eq:direct-pem}, every open-loop transfer function has a fully measured input.
With fewer internal measurements, direct PEM will be nonlinear in the  
open-loop transfer functions.

\subsection{Estimation, validation, and baseline}
The estimation data contain $N=500$ samples with independent Gaussian
disturbances of variances $\lambda^i=0.01i$. The external signals are random
binary sequences taking values $-1$ and $1$, generated using MATLAB's
\texttt{idinput} with the \texttt{rbs} option. For parameter statistics,
100 disturbance realizations use the same external-input data. Thus these
statistics describe variability conditional on that excitation data.

We do not impose the algebraic Riccati equation as an optimization
constraint. Instead, we use the equivalent missing-data formulation of
\cite{10590797}, based on \cite{HANSSON20121955,Wallin02112014}.
The implementation uses three initialization stages: autoregressive estimation
with exogenous inputs (ARX) and a
common disturbance variance, ARMAX estimation with that common variance, and
ARMAX estimation with separate variances. It uses a trust-region method from SciPy's \texttt{optimize} package, with derivatives computed using JAX \cite{jax}, double precision, default SciPy stopping
tolerances, and at most 1000 iterations per stage. The second and third stages are initialized
with the estimate from the preceding stage. Positive noise variances are
parameterized as $\lambda^i=10^{-2\rho_i}$ with unconstrained $\rho_i$.

The baseline is direct PEM, implemented through separate
ARMAX estimation in MATLAB R2024a using the System Identification Toolbox
\cite{ljung1995system}. The three equations in \eqref{eq:direct-pem} are fitted
separately using \texttt{armax} with orders
$[n_a,n_b,n_c,n_k]=[2,2,2,1]$ and
\texttt{SearchOptions.\allowbreak Tolerance}$=10^{-5}$.
Here $n_a,n_c$ are polynomial orders, $n_b$ counts input coefficients, and
$n_k$ is the input delay. With sparse measurement, not all
required input/output pairs are available; this baseline is therefore used
only when all three outputs are measured.

For validation, a separate 500-sample data record supplies $r^1,r^2,y^6$ to the
estimated subnetwork. Plant-only simulations $\hat y^i$, without new internal
disturbances, are compared with noisy reference outputs $y^{i,\mathrm{ref}}$ using
\begin{equation}
 F_i=100\left(1-\frac{\|\hat y^i-y^{i,\mathrm{ref}}\|_2}
 {\|y^{i,\mathrm{ref}}-\mathbf 1\bar y^{i,\mathrm{ref}}\|_2}\right),
 \label{eq:fit}
\end{equation}
where $\bar y^{i,\mathrm{ref}}$ is the sample mean and $\mathbf 1$ is the
500-dimensional vector of ones. These individual simulation fits use
measured inputs, not one-step predictions or free full-network simulation.
Withheld outputs are used only for scoring. Both methods use the same
500 validation samples without burn-in removal.

For the Monte Carlo statistics, define the target coefficient vector as
$\alpha_A=(a_1^1,a_2^1,b_1^1,b_2^1,\ldots,a_1^3,a_2^3,b_1^3,b_2^3)
\in\mathbb R^{12}$. Let $\hat\alpha^{(\ell)}$ denote its estimate in run $\ell$.
Let $\alpha_{A,0}$ be its true value. For $R=100$ runs, define the sample
mean $\bar\alpha$, empirical covariance $\widehat C$, squared bias $B^2$,
and mean-square error (MSE) by
\begin{align}
 \bar\alpha&=R^{-1}\sum_{\ell=1}^R\hat\alpha^{(\ell)},\notag\\
 \widehat C&=R^{-1}\sum_{\ell=1}^R(\hat\alpha^{(\ell)}-\bar\alpha)
 (\hat\alpha^{(\ell)}-\bar\alpha)^T,\notag\\
 B^2&=\|\bar\alpha-\alpha_{A,0}\|_2^2,\qquad
 \mathrm{MSE}=\operatorname{tr}\widehat C+B^2.\label{eq:statistics}
\end{align}
The divisor $R$ makes \eqref{eq:statistics} exact. All saved estimates are
included, without stability-based exclusion. Code and results are
public.\footnote{\url{https://github.com/Jvictormata/mle_sub_nets}}

\begin{table*}[t]
\centering
\caption{Individual simulation fits and 100-run parameter statistics;
measured inputs: $y^6,r^1,r^2$.}
\label{tab:table_PEM_vs_MLE}
\begin{tabular}{llrrrrrr}
\hline
Internal outputs&Method&$F_1$&$F_2$&$F_3$&$\operatorname{tr}\widehat C$&$B^2$&MSE\\
\hline
$y^3$&Proposed PEM&56.58&75.89&60.36&0.3764&1.3776&1.7539\\
$y^1,y^3$&Proposed PEM&57.10&75.84&60.21&0.2442&1.2648&1.5090\\
$y^1,y^2,y^3$&Proposed PEM&57.13&76.86&60.43&0.8501&0.1974&1.0475\\
$y^1,y^2,y^3$&Direct PEM (linear in $\mathbf G$)&57.13&76.86&60.43&0.8717&0.1400&1.0117\\
\hline
\end{tabular}
\end{table*}

\subsection{Results and discussion}
For the network in Figure~\ref{fig:network_num_example}, the proposed PEM
and direct PEM give similar results with the same measured signals.
Its main advantage is that all three open-loop transfer functions can be
recovered using fewer internal measurements, with only a small decrease in
the fit values: measuring only $y^3$ changes each fit by less than one
percentage point relative to complete internal measurement.

A larger parameter bias is observed when fewer signals are measured.
The squared bias increases from $0.1974$ with all three outputs to $1.3776$
with $y^3$ alone, while the parameter MSE increases from $1.0475$ to $1.7539$.
The objective in \eqref{opt:ML-stationary} is nonconvex, so convergence to
local minima can contribute to the observed parameter bias.

\section{Conclusions}
This paper has shown that a class of subnetwork identification problems can be reformulated as complete-network identification problems through augmentation of the measured external signals.
By augmenting the measured exogenous signals with measured boundary signals entering the subnetwork from the surrounding network, the subnetwork identification problem can be cast in a form closely related to the identification of a complete network.

Consistency requires the A-generated component of each measured boundary
signal entering the A-part to be reconstructible from the measured history.
Measuring every signal from A entering B is a simple sufficient condition,
but it is not necessary. The A-part may be enlarged to satisfy this structural
condition, although reconstructibility can permit a smaller subnetwork. The
same predictor construction as in
\cite{hansson2025identifiabilitymaximumlikelihoodestimation} can then be
applied to the enlarged A-part.

An important advantage of the approach is that it allows the use of predictors that are nonlinear in the open-loop transfer functions. This makes it possible to obtain consistent estimates using fewer measured signals than what is typically required by PEM methods based on predictors that are linear in the open-loop transfer functions.

Future work will focus on numerical aspects of the optimization problem and further investigations of measurement requirements for identification of subnetworks.
\section{Appendix}\label{sec:appendix}
Consider \eqref{eqn:y-model}--\eqref{eqn:u-model}, and assume that
$m_e\leq n$. For $p\in\{n,m_r\}$, define
\begin{equation}
 J_p=\mathbbm 1_n\otimes I_p\in\mathbb R^{np\times p},
 \qquad
 S_p=I_n\otimes\mathbbm 1_p^T\in\mathbb R^{n\times np},
 \label{eq:appendix-replication-summation}
\end{equation}
where $\mathbbm 1_p$ is the $p$-dimensional vector of ones. For a transfer
matrix $\mathbf A(q)\in\mathbb R^{n\times p}(q)$, let
$\operatorname{vec}_{\rm r}(\mathbf A)$ denote the vector obtained by stacking
the entries of $\mathbf A$ row by row, and define
\begin{equation}
 D_{\mathbf A}(q)
 =\diag\!\left(\operatorname{vec}_{\rm r}(\mathbf A(q))\right)
 \in\mathbb R^{np\times np}(q).
 \label{eq:appendix-diagonalization}
\end{equation}
The definitions in \eqref{eq:appendix-replication-summation}--
\eqref{eq:appendix-diagonalization} give
\begin{equation}
 \mathbf A(q)z_k=S_pD_{\mathbf A}(q)J_pz_k,
 \qquad z_k\in\mathbb R^p.
 \label{eq:appendix-factorization}
\end{equation}

Let $d=n^2+nm_r$ and define
\begin{equation}
 \begin{aligned}
 e_k^p&=\BBM e_k\\0_{n-m_e}\EBM\in\mathbb R^n,\\
 \mathbf H_y^p&=\BBM\mathbf H_y&0_{n\times(n-m_e)}\EBM,\\
 \mathbf H_u^p&=\BBM\mathbf H_u&0_{n\times(n-m_e)}\EBM.
 \end{aligned}
 \label{eq:appendix-noise-padding}
\end{equation}
Then $\mathbf H_y^pe_k^p=\mathbf H_ye_k$ and
$\mathbf H_u^pe_k^p=\mathbf H_ue_k$. Define
\begin{align}
 \tilde{\mathbf G}
 &=\blkdiag\!\left(
 D_{\mathbf G},D_{\mathbf\Omega_y},
 D_{\mathbf\Upsilon},D_{\mathbf\Omega_u}\right)
 \in\mathbb R^{2d\times2d}(q),
 \label{eq:appendix-Gtilde}\\
 \tilde{\mathbf H}
 &=\blkdiag\!\left(D_{\mathbf H_y^p},0_{nm_r\times nm_r},\right.\notag\\
 &\hspace{4.6em}\left.
 D_{\mathbf H_u^p},0_{nm_r\times nm_r}\right)
 \in\mathbb R^{2d\times2d}(q),
 \label{eq:appendix-Htilde}\\
 \tilde e_k
 &=\BBM
 J_ne_k^p\\0_{nm_r}\\J_ne_k^p\\0_{nm_r}
 \EBM\in\mathbb R^{2d}.
 \label{eq:appendix-etilde}
\end{align}
Both $\tilde{\mathbf G}$ and $\tilde{\mathbf H}$ are diagonal transfer
matrices.

Define
\begin{equation}
 P=\BBM S_n&S_{m_r}\EBM\in\mathbb R^{n\times d},
 \qquad K=J_nP\in\mathbb R^{n^2\times d}.
 \label{eq:appendix-PK}
\end{equation}
Using block-row dimensions $n^2,nm_r,n^2,nm_r$ and two block-column
dimensions $d,d$, let
\begin{align}
 \tilde\Upsilon
 &=\BBM
 0_{n^2\times d}&K\\
 0_{nm_r\times d}&0_{nm_r\times d}\\
 K&0_{n^2\times d}\\
 0_{nm_r\times d}&0_{nm_r\times d}
 \EBM\in\mathbb R^{2d\times2d},
 \label{eq:appendix-Upsilon-tilde}\\
 \tilde\Omega
 &=\BBM
 0_{n^2\times m_r}\\J_{m_r}\\
 0_{n^2\times m_r}\\J_{m_r}
 \EBM\in\mathbb R^{2d\times m_r}.
 \label{eq:appendix-Omega-tilde}
\end{align}
The expanded network is
\begin{align}
 \tilde y_k
 &=\tilde{\mathbf G}(q)\tilde u_k
   +\tilde{\mathbf H}(q)\tilde e_k,
 \label{eq:appendix-expanded-output}\\
 \tilde u_k
 &=\tilde\Upsilon\tilde y_k+\tilde\Omega r_k.
 \label{eq:appendix-expanded-input}
\end{align}
Finally, with
\begin{equation}
 \Psi=\blkdiag(P,P)\in\mathbb R^{2n\times2d},
 \label{eq:appendix-Phi}
\end{equation}
equation \eqref{eq:appendix-factorization} gives
\begin{equation}
 \BBM y_k\\u_k\EBM=\Psi\tilde y_k.
 \label{eq:appendix-reconstruction}
\end{equation}
Thus, \eqref{eqn:y-model}--\eqref{eqn:u-model} and
\eqref{eq:appendix-expanded-output}--\eqref{eq:appendix-reconstruction}
generate the same signals $y_k$ and $u_k$. The expanded description has
diagonal plant and noise transfer matrices and is a Box--Jenkins description.
If a component of $\tilde y_k$ is identically zero, that component and the
corresponding rows and columns can be removed.
\end{document}